\documentclass[conference]{IEEEtran}
\IEEEoverridecommandlockouts
\usepackage{cite}
\usepackage{amsmath,amssymb,amsfonts}
\usepackage{algorithmic}
\usepackage{graphicx}
\usepackage{textcomp}
\usepackage{xcolor}
\usepackage{amsthm}
\newtheorem{lemma}{Lemma}
\newtheorem{corollary}{Corollary}
\newtheorem{proposition}{Proposition}
\newtheorem{theorem}{Theorem}
\newtheorem{remark}{Remark}
\usepackage{comment}
\usepackage{balance}
\def\BibTeX{{\rm B\kern-.05em{\sc i\kern-.025em b}\kern-.08em
    T\kern-.1667em\lower.7ex\hbox{E}\kern-.125emX}}

\begin{document}

\title{Transmit or Idle: Efficient AoI Optimal Transmission Policy for Gossiping Receivers }

\author{Irtiza Hasan$\qquad$Ahmed Arafa\\Department of Electrical and Computer Engineering\\ University of North Carolina at Charlotte, NC 28223\\
\emph{ihasan@charlotte.edu}$\qquad$\emph{aarafa@charlotte.edu}
\thanks{This work was supported by the U.S. National Science Foundation under Grants CNS 21-14537 and ECCS 21-46099.}}

\maketitle


\begin{abstract}
We study the optimal transmission and scheduling policy for a transmitter (source) communicating with two \emph{gossiping} receivers aiming at tracking the source's status over time using the age of information (AoI) metric. Gossiping enables local information exchange in a decentralized manner without relying solely on the transmitter's direct communication, which we assume incurs a transmission cost. On the other hand, gossiping may be communicating stale information, necessitating the transmitter's intervention. With communication links having specific success probabilities, we formulate an average-cost Markov Decision Process (MDP) to jointly minimize the sum AoI and transmission cost for such a system in a time-slotted setting. We employ the Relative Value Iteration (RVI) algorithm to evaluate the optimal policy for the transmitter and then prove several structural properties showing that it has an \emph{age-difference threshold structure with minimum age activation} in the case where gossiping is relatively more reliable. Specifically, direct transmission is optimal only if the minimum AoI of the receivers is large enough \emph{and} their age difference is below a certain threshold. Otherwise, the transmitter idles to effectively take advantage of gossiping and reduce direct transmission costs. Numerical evaluations demonstrate the significance of our optimal policy compared to multiple baselines. Our result is a first step towards characterizing optimal freshness and transmission cost trade-offs in gossiping networks.
\end{abstract}


\section{Introduction}

Information freshness in status updating systems such as sensor networks, autonomous vehicles, and the Internet-of-Things (IoT) is critical for successful control, estimation and decision-making. Age of Information (AoI), defined as the time elapsed since the latest delivered piece of information has been generated at its source, has emerged as an important metric for tracking timeliness of information delivery in such systems \cite{AoISurvey}. The aforementioned and similar other networked-systems can be modeled as gossip networks where receivers share their information among themselves---a process we term as \emph{gossiping}. This gossiping behavior among nodes opens up an interesting choice for information dissemination at transmitters: with the knowledge that receivers may gossip, transmitters can choose to rely on them to partly do their own job of propagating information to other receivers. Instead of costly transmissions from a transmitter to each receiver, such transmitter can idle while receivers gossip. On the other hand, gossiping might be occurring using relatively stale data since transmitters may have fresher information that they have not yet shared. This calls for transmitters to operate within a policy to save transmission costs with a trade-off of information freshness in the network. Our goal in this paper is to introduce and characterize this trade-off in an optimal manner.

Information freshness studies in gossiping have analyzed how topology, cluster formation, and mobility patterns affect the freshness in gossip networks  \cite{AoG,VAoIClustered,RCGossip,VAoIContact,AoGSurvey}. 
One recent work \cite{VAoIPolicy} integrates freshness-aware control policies in such a system by formulating a Markov Decision Process (MDP) for an energy-harvesting sensor and a cache-enabled aggregator, showing that the optimal policy has a \emph{threshold} structure. Characterizing the \emph{structure} of optimal policies for freshness in status updating systems has been studied extensively in different settings. For example, in RF-powered IoT networks, the works in \cite{AbdElmagidTWC2019, AbdElmagidJoint2020} establish the monotonicity of their modeled MDP's value function and show that the optimal sampling and transmission strategy is threshold-based with respect to state variables. In computation offloading, reference~\cite{JeffModiano2025} demonstrates that the delay-optimal policy also has a threshold structure. In a transmitter-receiver pair scenario, using the Age of Incorrect Information (AoII) metric, the optimal transmission policy is shown to have a \emph{randomized threshold} structure in \cite{MaatoukAoII2022}. Knowing the optimal policy's structure is significantly useful, e.g., for implementing Reinforcement Learning (RL)-based methods as has been shown in \cite{ThresholdNW1, ThresholdNW2, StrcutureRL}. 

\begin{figure}[t]
    \centering
    \includegraphics[width=0.8\linewidth]{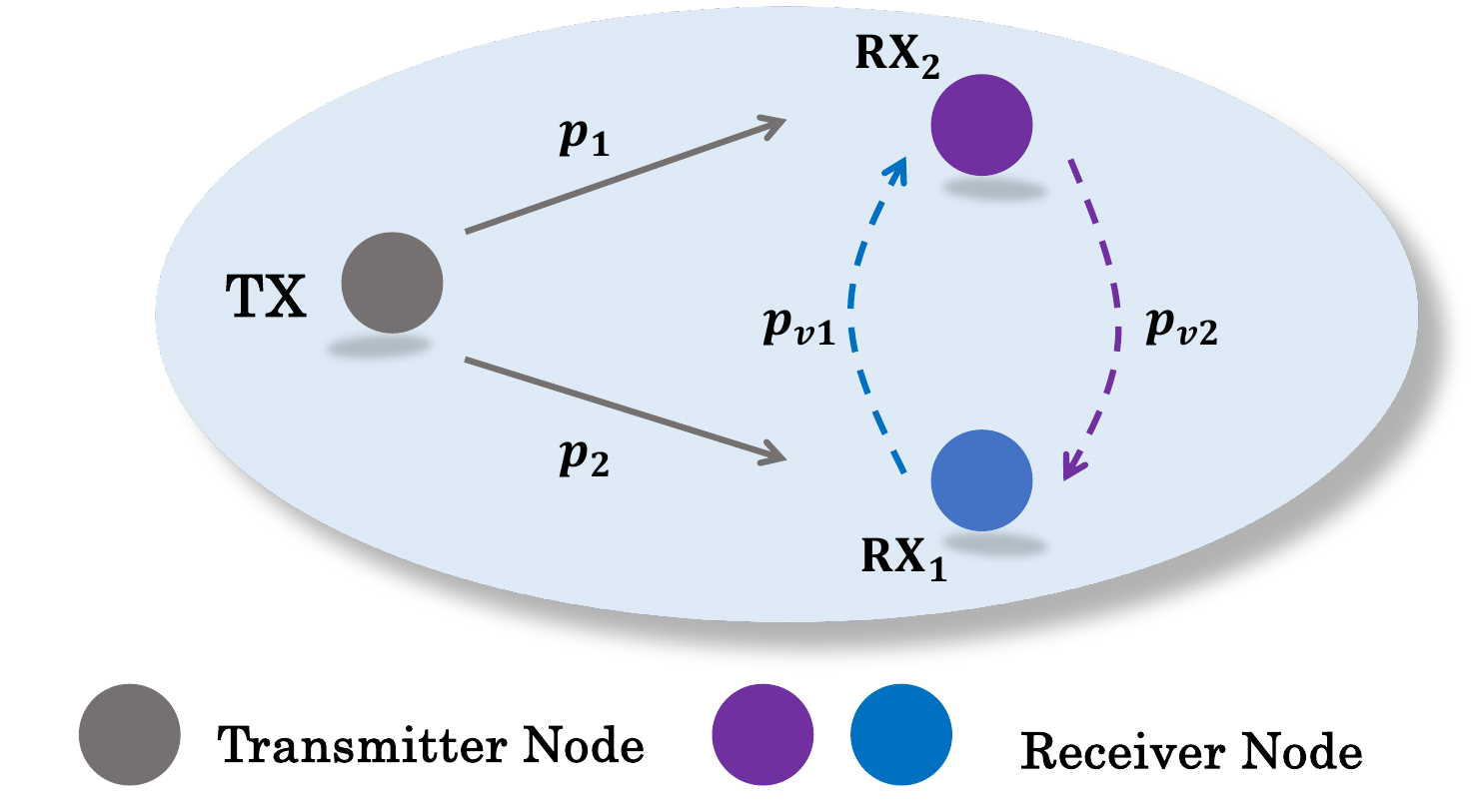} 
    \caption{Transmitter updating two receivers via direct links (solid lines) and receivers gossiping (dashed lines). Symbols above the lines indicate channels' successful communication probabilities.}
    \label{fig:GR System Model}
\end{figure}

\begin{figure*}[t]
    \centering
    \includegraphics[width=0.8\linewidth]{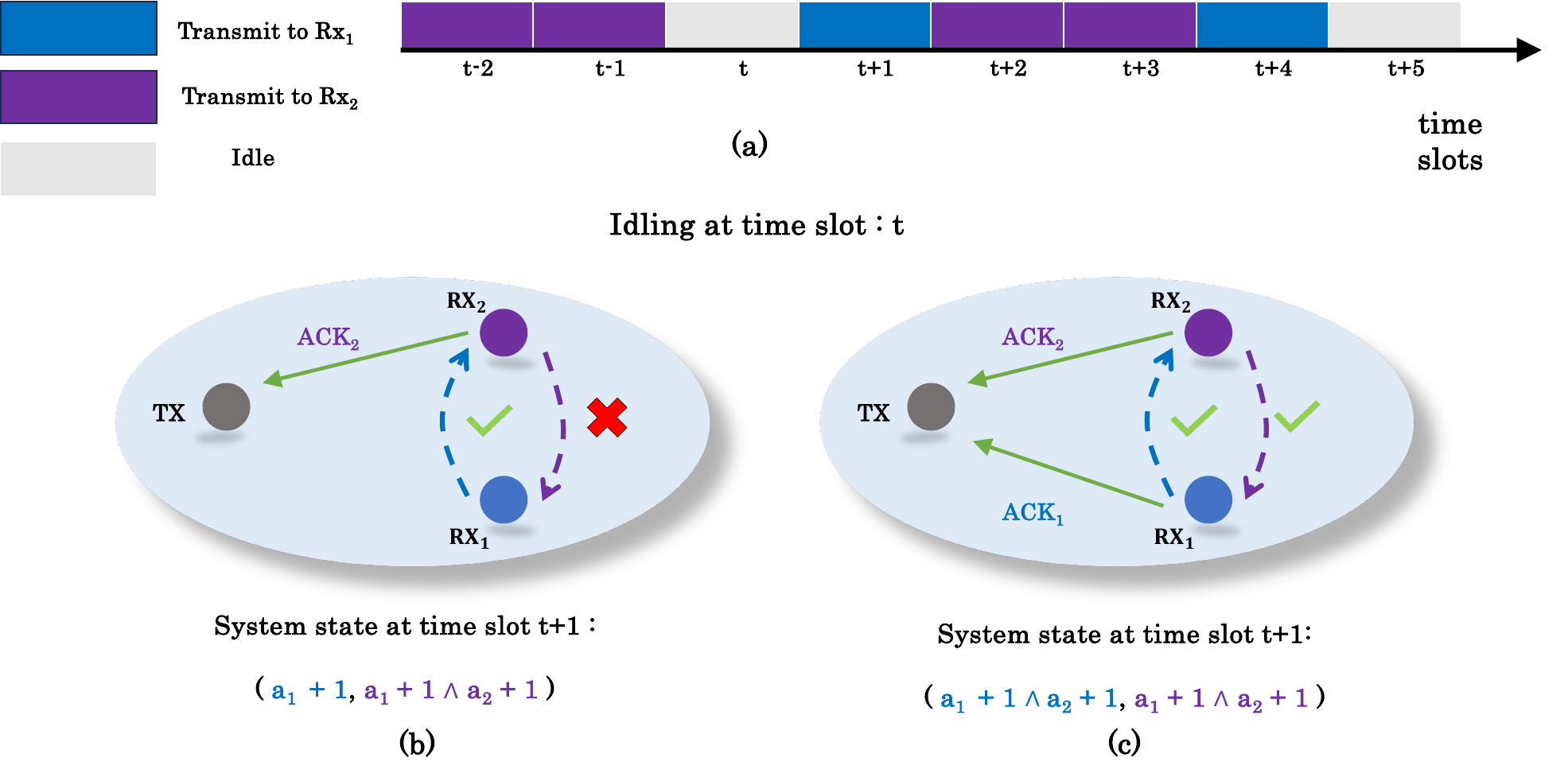} 
    \caption{(a) At each time slot, the transmitter has three action options: transmit to RX$_1$ (blue) or RX$_2$ (purple), or idle (gray). (b) If the TX idles in any time slot (e.g., in time slot $t$), both RX$_1$ and RX$_2$ can gossip. Whichever receiver has a successful update from the other lets the transmitter know by an ACK. (c) If both receivers have successful gossip updates, TX receives ACK's from both of them. TX maintains perfect RX age knowledge in all cases.}
    \label{fig:ACK}
\end{figure*}

Building upon these insights, we consider a setting in which there is a single source of information (a transmitter) and two gossiping receivers as shown in Fig. \ref{fig:GR System Model}. The transmitter may either update one receiver directly, after paying some cost, or remain idle to allow peer-to-peer gossiping. Both direct transmissions and gossiping are subject to failures with certain probabilities, e.g., one can be more reliable than the other. The goal is to devise a policy that minimizes the long-term average sum of AoI and transmission cost. We formulate this problem as an \emph{average-cost MDP} and solve it using the Relative Value Iteration (RVI) algorithm. When gossiping is relatively more reliable, it is shown that direct transmission to the older (higher age) receiver is optimal, provided that $(1)$ the minimum age of the two receivers is large enough, {\it and} $(2)$ the age difference between the two receivers is within a certain threshold. Otherwise, the optimal policy for the transmitter is to idle. Such thresholds depend on the system parameters, and in particular on the transmission cost. We describe this optimal policy as having an \emph{age-difference threshold structure with minimum age activation}. We prove several properties of the MDP's value function and formally prove the structure of the optimal policy. We compare the performance against multiple baselines used in the literature, including Max-Age-First (MAF), Max-Age-First with a threshold (MAFT), throughput-optimal (TPO) and random policies, and show that our optimal policy provides a relatively significant cost reduction in various settings.


\section{System Model and MDP Formulation}
We consider a time slotted system indexed by $t \in \{1, 2,...\}$. In each time slot, the transmitter (TX) can choose to transmit to (update) one of two receivers (RX), or it may idle. If the transmitter is idling, then both the receivers can gossip with each other. The TX always sends fresh updates, while RXs gossip with what they have. A direct transmission from TX to RX$_i$ is successful with probability $p_i$, and a gossip transmission from RX$_i$ to RX$_j$ is successful with probability $p_{vi}$, see Fig.~\ref{fig:GR System Model}. We assume the receivers send ACK/NACK feedback packets following each  transmitter's update attempt. Additionally, a receiver sends an ACK to the transmitter if it successfully receives a gossip update from the other receiver during an idling time slot. Therefore, the transmitter has perfect knowledge of both the receiver's ages due to these ACK/NACK packets. The process is illustrated in Fig. \ref{fig:ACK}.

Let $a_i(t)$ denote the AoI of RX$_i$. This is defined as $a_i(t)=t-\sigma_i(t)$, where $\sigma_i(t)$ denotes the time stamp of the latest successfully-received packet at RX$_i$. The system state is defined as the pair of AoI for the receivers, $s(t) = (a_1(t), a_2(t))$, and the state space is $\mathcal{S} = \{0, 1,2, \dots, A_{\max}\}^2$, where AoI saturates at a finite value $A_{\max}$. We assume a one-slot service delay: updates take one time slot to reach their destinations and be processed, and so the age becomes $1$ at RX$_i$ only after a fresh update has been transmitted to it from TX. The state $(0,0)$ is an initial state occurring only at $t=0$, and hence $(1,1)$ will appear at $t=1$, regardless of the TX's action at $t=1$, but is not reachable thereafter. The states $(1,0)$ and $(0,1)$ can not occur and are not part of the state space. 

Whenever RX$_i$ receives fresher information than what it has, either through direct transmission or gossiping, it updates its age and lowers it accordingly. Otherwise, its age evolves as  $a_i(t+1) = (a_i(t) + 1 ) \wedge A_{\max}$, where $a\wedge b$ denotes $\min(a,b)$. At time slot $t$, the transmitter can choose an action $u(t)$ from its action space $\mathcal{A}\in \{0,1, 2\}.$ Here, $u = 1$ or $u=2$ denotes transmission to RX$_1$ or RX$_2$, respectively, whereas $u=0$ denotes the transmitter idling. For notational convenience, we define the following shorthand notations:
\begin{align}
    m_i(t)&=(a_i(t) + 1 ) \wedge A_{\max},\quad i=1,2, \\
    m(t)&=m_1(t)\wedge m_2(t).
    \label{eq:notation}
\end{align}
Let us now observe the system dynamics, denoted by the conditional distribution $P(s(t+1)|s(t), u(t))$. For ease of presentation, we drop the dependency on $t$ and denote the next state as $s' = (a_1 (t+1), a_2(t+1))$. For $u = 1 $, we have
\begin{equation}
    P(s'|s, u = 1) = 
    \begin{cases}
    ( 1, m_2 ), & \text{w.p. }  p_1 \\[10pt]
    ( m_1, m_2 ), & \text{w.p. }  1- p_1\\
    \end{cases}.
    \label{eq:system dynamics u1}
\end{equation}
In a similar manner, we have 
\begin{equation}
    P(s'|s, u = 2) = 
    \begin{cases}
    ( m_1, 1 ), & \text{w.p. }  p_2 \\[10pt]
    ( m_1, m_2 ), & \text{w.p. }  1- p_2\\
    \end{cases}
    \label{eq:system dynamics u2}
\end{equation}
for $u=2$. If the transmitter idles ($u=0$), then the receivers can gossip with each other. We assume that successful gossiping from RX$_1$ to RX$_2$ is independent of that from RX$_2$ to RX$_1$. The system dynamics in this case are as follows:
\begin{equation}
    P(s'|s, u = 0) = 
    \begin{cases}
    ( m, m ), & \text{w.p. }  p_{v1} p_{v2} \\[6pt]
    ( m_1, m ), & \text{w.p. }  p_{v1}(1- p_{v2}) \\[6pt]
    ( m, m_2 ), & \text{w.p. }  (1- p_{v1})p_{v2} \\[6pt]
    ( m_1, m_2 ), & \text{w.p. }  (1- p_{v1})(1- p_{v2})  \\[6pt]
    \end{cases}.
    \label{eq:system dynamics u0}
\end{equation}

Next, we analyze the system cost, denoted by the function $c(s(t),u(t))$. First, whenever TX attempts a transmission, a non-negative fixed cost $C_{\mathrm{tx}}$ is incurred. Note that there is no cost incurred when idling, even if gossiping is successful. The transmitter incorporates the channel statistics knowledge (i.e., $p_1$, $p_2$, $p_{v1}$ and $p_{v2}$) and defines the system cost at a given time slot as the expected sum of both RXs' AoIs at the next time slot plus the transmission cost (if any). Thus, the cost $c(s, u)$ for each $u$ (after dropping the $t$) is given as follows. Following the system dynamics in \eqref{eq:system dynamics u1}, we have
\begin{equation}
    c(s, 1) = p_1(1 + m_2) + (1 - p_1)(m_1 + m_2) + C_{\mathrm{tx}}.
\end{equation}
To understand the above equation, observe that there are two possible outcomes when TX's action is $u=1$. Either the transmission is successful, w.p. $p_1$, in which case the AoI of RX$_1$ goes down to 1 (owing to one time slot delay), while the AoI of RX$_2$ increments to $m_2$. The state becomes $(1, m_2)$, and hence their sum becomes $1+m_2$. In the other case, w.p. $1- p_1$, direct transmission fails and the state becomes $(m_1, m_2)$. Additionally, we add the transmission cost $C_{\mathrm{tx}}$. Similarly, for $u=2$ one can derive from \eqref{eq:system dynamics u2} that
\begin{equation}
    c(s, 2) = p_2(m_1 + 1) + (1 - p_2)(m_1 + m_2) + C_{\mathrm{tx}}.
\end{equation}
When TX idles ($u=0$), there is no transmission cost $(C_{\mathrm{tx}} = 0)$. By the end of the time slot, the ages of RXs evolve to $m_1$ and $m_2$. The gossip model assumes independent bidirectional success between RXs, leading to four possible outcomes per idle slot. If gossip is successful bidirectionally, both ages become $m=\min(m_1, m_2)$. If gossip is successful uni-directionally, the sum of ages become $m_1+m$ or $m+m_2$ depending upon the direction of successful gossip. If both gossip links fail, the sum of ages becomes $m_1 + m_2$. Thus, for idling ($u=0$), we have from \eqref{eq:system dynamics u0} that
\begin{equation}
\begin{aligned}
    c(s, 0) ={}& (p_{v1} p_{v2})(2m) + p_{v1}(1-p_{v2})(m_1 + m) \\
               & + (1-p_{v1})p_{v2}(m + m_2) \\
               & + (1-p_{v1})(1-p_{v2})(m_1 + m_2).
\end{aligned}
\end{equation}

With these, we can now formulate our average-cost MDP. The objective is to find a policy $\pi : \mathcal{S} \to \mathcal{A}$ that minimizes the long-term average system cost, subject to the system dynamics, i.e., to minimize
\begin{equation}
\rho^{\pi}(s) = \limsup_{T \to \infty} \frac{1}{T} \mathbb{E}^{\pi} \left[ \sum_{t=0}^{T-1} c(s(t), \pi(s(t))) \Bigg| s_0=s \right],
\label{eq:average cost MDP}
\end{equation}
where $s_0$ denotes the initial state and $\mathbb{E}^{\pi}$ denotes expectation under policy $\pi$. We note that the optimal cost, $\rho^* = \min_\pi \rho^\pi(s)$, is independent of the initial state $s_0$ since the MDP can be shown to have a single recurrent class \cite{Puterman}. We use the Relative Value Iteration (RVI) algorithm to evaluate the optimal cost $\rho^*$ and the relative value function $U(s)$ by solving the Bellman optimality equation\cite{Bertsekas}:
\begin{align}
    \rho^* + U(s) &= \min_{u \in \mathcal{A}} \left\{ c(s,u) + \sum_{s' \in \mathcal{S}} P(s' \mid s,u) U(s') \right\} \notag \\
                  &= \min_{u \in \mathcal{A}} Q(s, u).
    \label{eq:bellman}
\end{align}
Finally, the optimal stationary policy $\pi^*(s)$ is derived by selecting the action that achieves the minimum in the Bellman equation for each state $s$:
\begin{equation}
\pi^*(s) = \arg \min_{u \in \mathcal{A}} Q(s, u).
\label{eq:bellan policy}
\end{equation}

\section{Optimal Policy Visualization and Performance}
\label{sec:numerical}

\begin{figure}[t]
    \centering
    \includegraphics[width=0.8\linewidth]{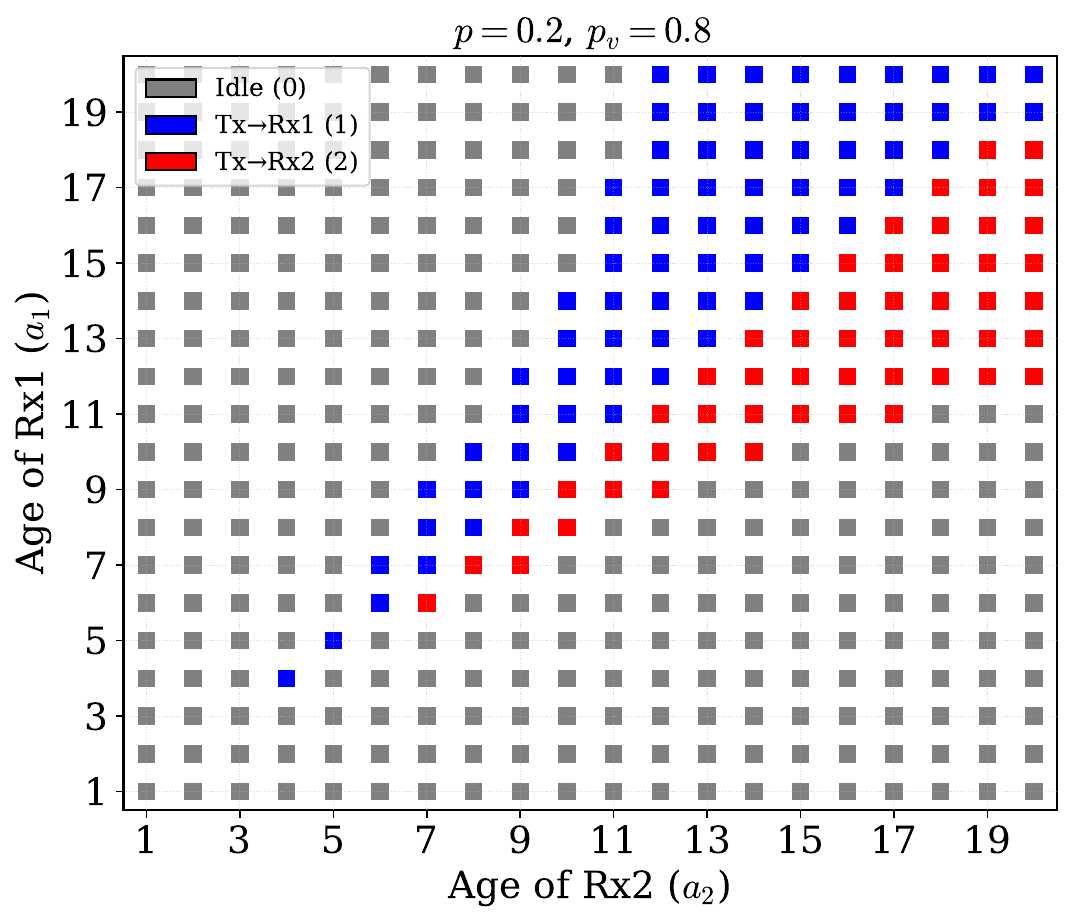} 
    \caption{Optimal policy visualization using RVI. Here, $C_{\mathrm{tx}}=1$.}
    \label{fig:Optimal Policy}
\end{figure}

In this section, we visualize the optimal policy and numerically compare the average cost for several cases of parameter values. Before doing so, let us summarize how we implement RVI to solve \eqref{eq:bellman}. We initialize with $U(s)^{(0)} = 0,~\forall s \in \mathcal{S}$, and for subsequent iterates $k>0$, we have
\begin{figure}[t]
    \centering
    \includegraphics[width=0.8\linewidth]{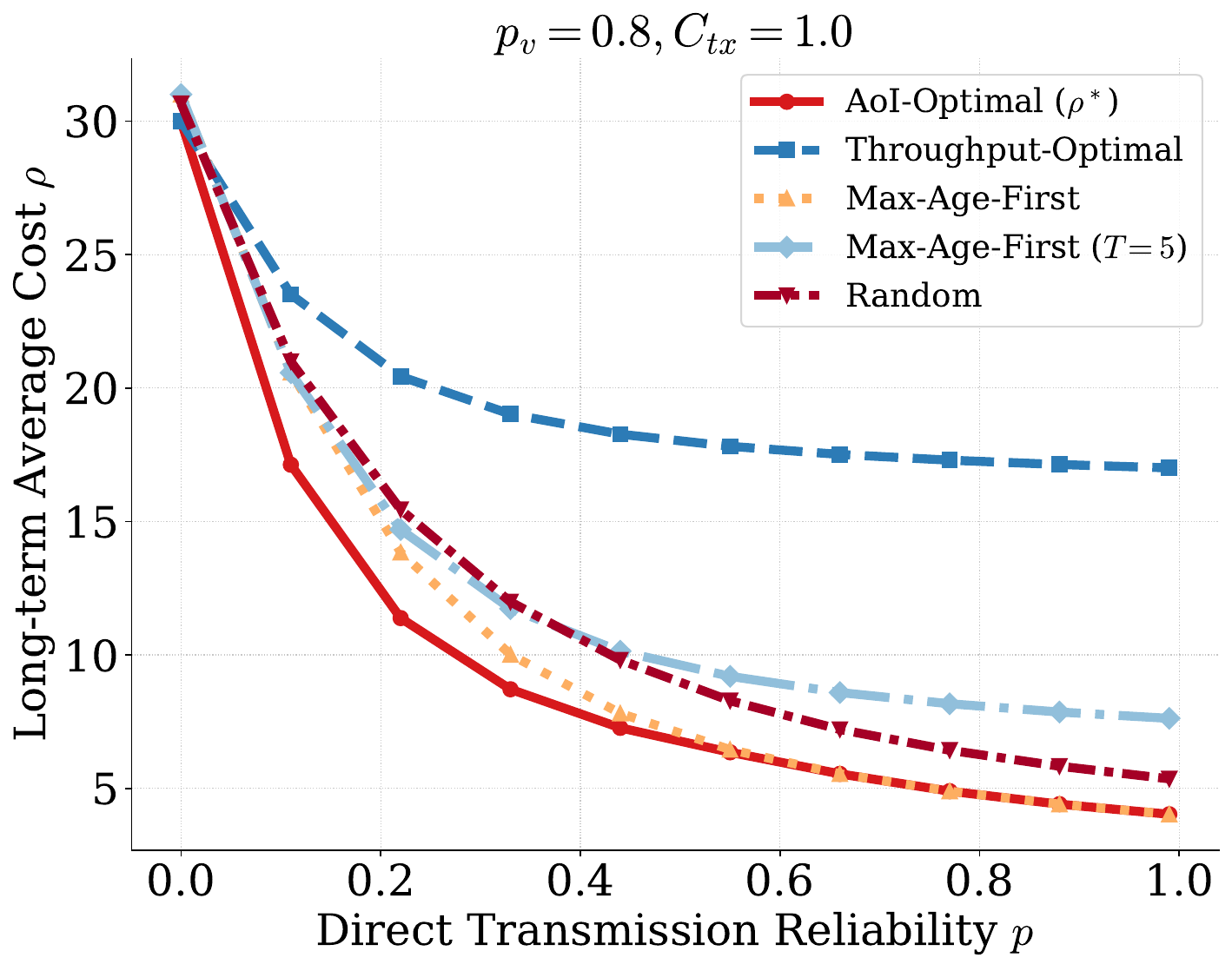} 
    \caption{Performance against varying direct transmission reliability.}.
    \label{fig:link reliability}
\end{figure} 
\begin{equation}
U^{(k+1)}(s) = \min_{u \in \mathcal{A}} \left\{ Q(s, u) \right\} - \rho_{k+1},
\label{eq:U^k+1}
\end{equation}
where $\rho_{k+1}$ is the relative value function for the reference state $s_{\text{ref}}$ (in our case $s_{\text{ref}} =  (1,1)$) in the $(k+1)$-th iteration. We subtract $\rho_{k+1}$ from each state and at each iteration the relative value function of the reference state becomes zero. Once $|| U^{k+1}(s) - U^{k}(s)|| < \epsilon $ holds for a chosen precision $\epsilon$, we stop iterating and retrieve $\rho^*, U$ and $\pi^*$.

Let us take a look at how the optimal policy looks like when gossiping is more reliable than direct transmission. We assume symmetric channels in which $p_v = p_{v1} = p_{v2}$ and $p_1 = p_2 = p$, with  $p_v > p  $. For the optimal policy in Fig.~\ref{fig:Optimal Policy}, we set $p_v = 0.8$, $p = 0.2$ and $C_{tx}=1$. We observe that the optimal policy in this case carries some structure, which we will formally analyze in the next section. First, we notice that the transmit decisions are symmetric with respect to the diagonal: the policy at state $(7,6)$ for instance is to transmit to RX$_1$, and if we switch the states to $(6,7)$ it is to transmit to RX$_2$. This applies to the whole state space at the optimal policy. Secondly, starting from the diagonal states, there exists a threshold along the horizontal direction after which there is a switch from transmitting to idling. Such threshold depends on the difference between $a_1$ and $a_2$. For example, from the state $(7,7)$ on the diagonal, as we keep increasing $a_2$, we go from transmitting to RX$_2$ to idling as we go from $(7,9)$ to $(7,10)$. After crossing the threshold, the optimal decision along the horizontal is to keep idle. This pattern repeats throughout the state space as we go horizontally or vertically from the diagonal (by symmetry). Finally, on the diagonal itself, as we go up on the diagonal the optimal decision for the transmitter switches from idling to transmitting, and it stays the same after the switch. For instance, there is a switch from idling to transmitting at state $(4,4)$.  After this switch, the decision along the diagonal is to always transmit. We note that transmitting to RX$_1$ or RX$_2$ is equally good on the diagonal (since $p_1 = p_2$); we arbitrarily choose $u=1$ (transmit to RX$_1$) in this case.

Next, we compare the optimal policy's performance against multiple baselines: MAF, serving the RX with the highest age; MAFT, which is the same as MAF but only activates after $\max_ia_i$ grows above a certain threshold $T$; TPO, which serves the best channel $\arg\max_ip_i$ (which is chosen arbitrarily here as RX$_1$ since $p_1=p_2$); and finally a policy that chooses randomly between all actions. All of the baseline policies (except the random one) are greedy in the sense that they always transmit (albeit after threshold for MAFT). We denote the optimal policy in the figures by AoI-optimal.

Fig.~\ref{fig:link reliability} shows how the direct transmission reliability $p$ affects the average cost. As $p$ increases, all policies achieve lower cost due to more successful transmissions, with AoI-optimal achieving the fastest decline as it balances transmission and idling to avoid redundant updates. We see that the performance gap between MAF and AoI-optimal is maximized in the region where gossiping is more reliable. This is further emphasized in Fig.~\ref{fig:gossip probability} where AoI-optimal effectively leverages gossiping to lower the overall cost, yielding a significant improvement. Meanwhile, greedy policies remain constant in Fig.~\ref{fig:gossip probability} since they do not adapt to gossiping reliability. Finally, Fig.~\ref{fig:transmission cost 2} illustrates the impact of transmission cost $C_{\mathrm{tx}}$. While the performance of some baselines are close to AoI-optimal for relatively low $C_{tx}$ (since transmission is inexpensive), the performance gap becomes significantly larger as $C_{\mathrm{tx}}$ increases, since almost all baselines are greedy.

\begin{figure}[t]
    \centering
    \includegraphics[width=0.8\linewidth]{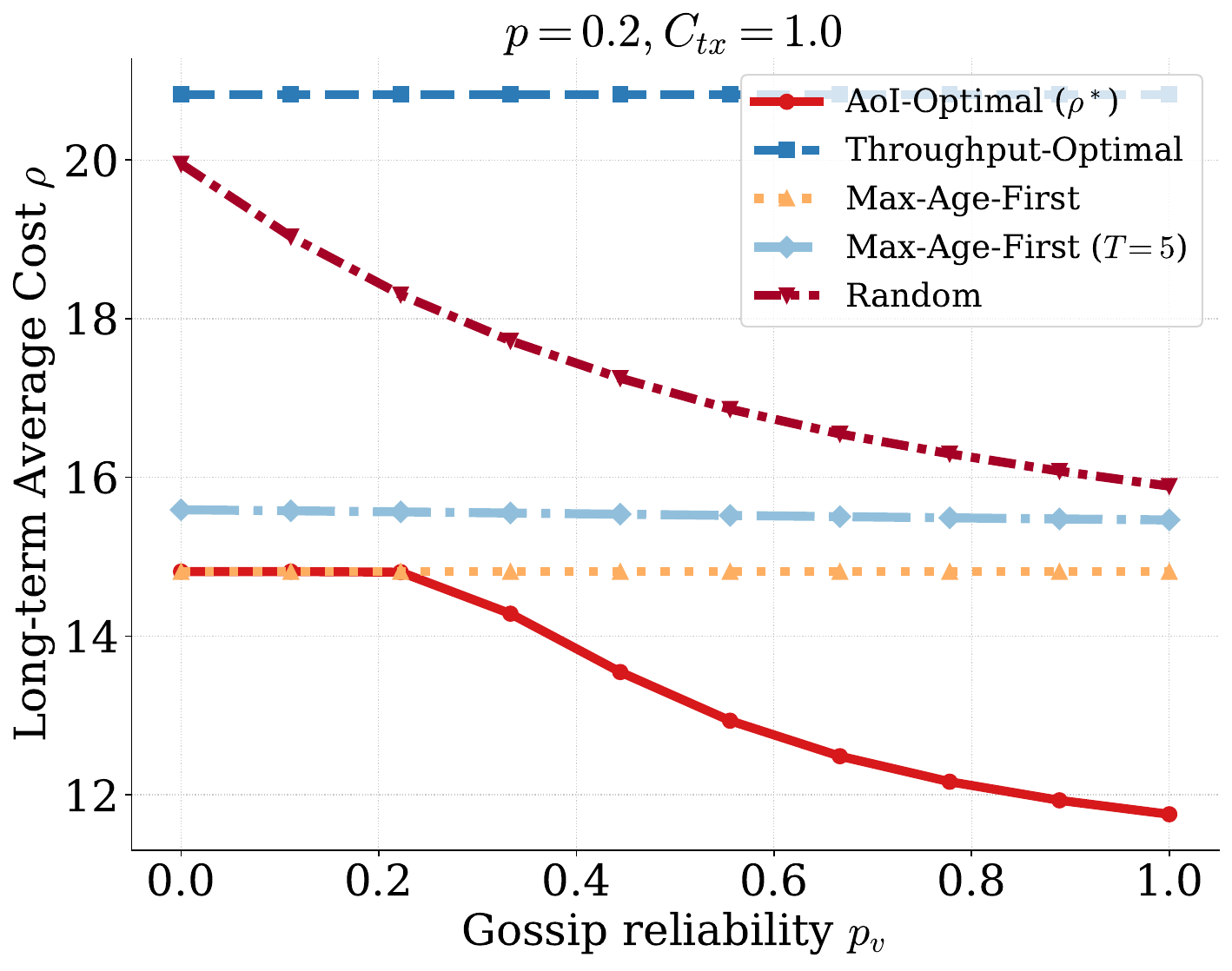} 
    \caption{Performance against varying gossiping reliability.}
    \label{fig:gossip probability}
\end{figure}

\section{Structural Analysis of the Optimal Policy}

In this section, we analyze the structural insights of the optimal RVI-based solution visualized in the previous section. We focus on the case of symmetric channels, with $p_v>p$ (gossiping is more reliable than direct transmission). Our goal is to provide reasoning behind the behavior of the optimal policy depicted in Fig.~\ref{fig:Optimal Policy}.

\subsection{Properties of the Relative Value Function}

We start by asserting some critical properties of the relative value function $U(s)$, stating its monotonicity and symmetry in the next two lemmas. Proofs are omitted due to space limits.

\begin{lemma}[Monotonicity]
\label{lemma:monotonicity}
The relative value function $U(s)$ is component-wise non-decreasing: for any states $(a_1, a_2)$ and $(a_1', a_2')$ in $\mathcal{S}$ such that $(a_1, a_2) \preceq (a_1', a_2')$, it holds that $U(a_1, a_2) \le U(a_1', a_2')$.
\end{lemma}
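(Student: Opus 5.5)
We prove the statement by induction on the iterates of the relative value iteration \eqref{eq:U^k+1}. Since subtracting the constant $\rho_{k+1}$ from every state does not change monotonicity, it suffices to work with the undiscounted value iteration $V^{(k+1)}(s)=\min_{u}\{c(s,u)+\sum_{s'}P(s'|s,u)V^{(k)}(s')\}$ with $V^{(0)}\equiv 0$. Because the state space is finite and the MDP is unichain (single recurrent class; aperiodicity can be ensured by the usual aperiodicity transformation if needed), $V^{(k)}-V^{(k)}(s_{\text{ref}})\to U$. Hence component-wise monotonicity of every $V^{(k)}$ passes to the limit $U$. The base case $V^{(0)}\equiv0$ is trivially non-decreasing.

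For the inductive step, assume $V^{(k)}$ is component-wise non-decreasing and fix $s=(a_1,a_2)\preceq s'=(a_1',a_2')$. It suffices to show $Q^{(k)}(s,u)\le Q^{(k)}(s',u)$ for each fixed action $u\in\{0,1,2\}$, since the minimum of non-decreasing functions is non-decreasing. The key observation is that the maps $a_i\mapsto m_i=(a_i+1)\wedge A_{\max}$ and $(a_1,a_2)\mapsto m=m_1\wedge m_2$ are all non-decreasing. So $m_1\le m_1'$, $m_2\le m_2'$ and $m\le m'$.

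We then treat each action by coupling the random outcomes, i.e., using the same success/failure events from $s$ and from $s'$.
\begin{itemize}
\item For $u=1$ the two next states are $(1,m_2)\preceq(1,m_2')$ and $(m_1,m_2)\preceq(m_1',m_2')$.
\item For $u=2$ the situation is symmetric.
\item For $u=0$ the four outcomes $(m,m)$, $(m_1,m)$, $(m,m_2)$, $(m_1,m_2)$ are each dominated component-wise by their primed counterparts.
\end{itemize}
In every case the outcome probabilities do not depend on the state. The immediate costs $c(s,u)$ are non-negative combinations of $m_1,m_2,m$ plus constants, so they are also non-decreasing. Applying the induction hypothesis outcome by outcome then gives $Q^{(k)}(s,u)\le Q^{(k)}(s',u)$.

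The step needing the most care is the legitimacy of passing from value iteration to the relative value function $U$. This requires convergence of RVI, which follows from the finite, unichain, aperiodic structure noted after \eqref{eq:average cost MDP}. A second point is that all outcome states must remain in $\mathcal{S}$ even though $(1,0)$ and $(0,1)$ are excluded. This holds because after one step every age is at least $1$, so only the initial state $(0,0)$ is special, and it is dominated by every other state. With convergence in hand, a limit of non-decreasing functions is non-decreasing, which proves Lemma~\ref{lemma:monotonicity}.
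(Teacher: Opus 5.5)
The paper omits this proof, but your argument is correct and is the standard route for monotonicity lemmas of this kind: induction over the value-iteration iterates with outcome-by-outcome coupling, using that $a\mapsto(a+1)\wedge A_{\max}$ and the minimum are non-decreasing and that the outcome probabilities do not depend on the state, then passing to the RVI limit. One refinement: no aperiodicity transformation is needed, because $p<1$ means $(A_{\max},A_{\max})$ is reached with positive probability from every state under every policy and has a self-loop under every action, which directly gives the unichain property, aperiodicity, and hence convergence of RVI.
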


\begin{lemma}[Symmetry]
\label{lemma:symmetry}
The relative value function $U(s)$ is symmetric with respect to its arguments: $U(a_1, a_2) = U(a_2, a_1)$ for all $(a_1, a_2) \in \mathcal{S}$.
\end{lemma}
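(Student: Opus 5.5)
We prove Lemma~\ref{lemma:symmetry} by induction over the iterates of RVI. Under the symmetric-channel assumption of this section ($p_1=p_2=p$, $p_{v1}=p_{v2}=p_v$), the MDP is invariant under the swap map $\phi(a_1,a_2)=(a_2,a_1)$ combined with the action relabeling $\psi(0)=0$, $\psi(1)=2$, $\psi(2)=1$. We show that every RVI iterate $U^{(k)}$ is symmetric, and then pass to the limit $U=\lim_k U^{(k)}$, whose existence is given by the convergence of RVI for this unichain finite MDP.

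\textbf{Step 1: invariance of costs and transitions.} We verify the following two identities:
\begin{equation*}
c(\phi(s),\psi(u))=c(s,u), \qquad P(\phi(s')\mid\phi(s),\psi(u))=P(s'\mid s,u).
\end{equation*}
Swapping $a_1$ and $a_2$ swaps $m_1$ and $m_2$ and leaves $m=m_1\wedge m_2$ unchanged.

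For the costs:
\begin{itemize}
\item $c(\phi(s),2)=p(m_2+1)+(1-p)(m_1+m_2)+C_{\mathrm{tx}}=c(s,1)$, and symmetrically $c(\phi(s),1)=c(s,2)$.
\item $c(\phi(s),0)=c(s,0)$ because, with $p_{v1}=p_{v2}$, the two unidirectional gossip terms exchange roles.
\end{itemize}

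For the transitions:
\begin{itemize}
\item Under \eqref{eq:system dynamics u1} and \eqref{eq:system dynamics u2}, the outcome $(1,m_2)$ of $u=1$ at $s$ maps under $\phi$ to $(m_2,1)$, which is exactly the success outcome of $u=2$ at $\phi(s)$, with the same probability $p$. The failure outcomes $(m_1,m_2)$ map to each other in the same way.
\item Under \eqref{eq:system dynamics u0}, the outcomes $(m,m)$ and $(m_1,m_2)$ map to their counterparts. The outcomes $(m_1,m)$ and $(m,m_2)$ exchange, and their probabilities $p_v(1-p_v)$ coincide.
\end{itemize}
The state space is closed under $\phi$, and so is the reference state, since $\phi(1,1)=(1,1)$.

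\textbf{Step 2: induction.} The initialization $U^{(0)}\equiv 0$ is symmetric. Assume $U^{(k)}(\phi(s))=U^{(k)}(s)$ for all $s$. Then, by Step~1 and the change of variables $s'\mapsto\phi(s')$,
\begin{equation*}
Q^{(k)}(\phi(s),\psi(u))=c(s,u)+\sum_{s'}P(s'\mid s,u)\,U^{(k)}(\phi(s'))=Q^{(k)}(s,u).
\end{equation*}
Since $\psi$ is a bijection on $\mathcal{A}$, taking the minimum over $u$ gives $\min_u Q^{(k)}(\phi(s),u)=\min_u Q^{(k)}(s,u)$. The normalization $\rho_{k+1}$ is a single scalar, namely the value at the $\phi$-fixed reference state $(1,1)$, and it is subtracted from every state. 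Hence $U^{(k+1)}$ is symmetric.

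\textbf{Step 3: limit.} Symmetry is preserved under pointwise limits, so $U$ is symmetric.

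Alternatively, we can avoid relying on the convergence of the iterates. If $(\rho^*,U)$ solves \eqref{eq:bellman}, then Step~1 shows that $(\rho^*,U\circ\phi)$ also solves it with the same normalization $U(1,1)=0$. Uniqueness of the normalized solution of the unichain average-cost Bellman equation then gives $U=U\circ\phi$. The main point requiring care is this final justification, either convergence of RVI or uniqueness up to normalization, which we take from standard results~\cite{Puterman,Bertsekas}. Steps~1 and~2 are routine.
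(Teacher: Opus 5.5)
Your proof is correct. The paper omits its proof of this lemma (``Proofs are omitted due to space limits''), so there is no argument to compare against. Induction over the RVI iterates, using invariance of $c$ and $P$ under the swap $\phi$ with the action relabeling $\psi$, is the standard route for such structural lemmas and fits the paper's RVI framework.

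You were right to make the symmetric-channel assumption $p_1=p_2$, $p_{v1}=p_{v2}$ explicit. The lemma's statement omits it, and the lemma fails without it.

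The step you defer to the literature (RVI convergence, or uniqueness of the normalized solution) can be closed concretely in this model:
\begin{itemize}
\item In any slot where no direct transmission succeeds, the minimum age increases by one, saturating at $A_{\max}$. This is because gossip only sets an age to $m=m_1\wedge m_2$.
\item Hence $(A_{\max},A_{\max})$ is reached from every state under every stationary policy, with probability at least $(1-p)^{A_{\max}}>0$. Note that $p<p_v\le 1$.
\item $(A_{\max},A_{\max})$ has a self-loop under every action.
\end{itemize}
So every stationary policy is unichain and aperiodic, with $(A_{\max},A_{\max})$ in its recurrent class. These are the standard conditions for RVI convergence.

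The same fact gives uniqueness. Let $w=h_1-h_2$ be the difference of two Bellman solutions, and let $\pi_1,\pi_2$ be their respective greedy policies. Then $w\le P_{\pi_2}w$ and $w\ge P_{\pi_1}w$. Consequently $w$ attains its maximum on the recurrent class of $\pi_2$ and its minimum on that of $\pi_1$. Both classes contain $(A_{\max},A_{\max})$, so $w$ is constant. The normalization at $(1,1)$ then forces $w=0$.
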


Without loss of generality, using Lemma \ref{lemma:symmetry}, we now restrict ourselves to the half plane where $a_1 < a_2$. For $a_1>a_2$, the policy switches analogously as shown in Fig.~\ref{fig:Optimal Policy}'s illustration in the previous section.

\subsection{Age-Difference Threshold Structure with Minimum Age Activation}

We now show that the optimal scheduling policy has a threshold structure based on the age difference between the two receivers and that it is only applied when the minimum age of the two receivers is large enough. We parameterize the state $s=(a_1, a_2)$ by its minimum age component, $m = \min(a_1, a_2)$, and the age difference $d = |a_1 - a_2|$. This means that for $a_1 < a_2$, the states have the parametrized form $s = (m, m+d)$ with $d \ge 0$. We now show that when the ages are unequal it is not optimal to serve the receiver with the lower age. The proof depends on Lemmas~\ref{lemma:monotonicity} and~\ref{lemma:symmetry}, and is omitted due to space limits.

\begin{proposition}
\label{prop:serve_older}
For any state $s=(a_1, a_2)$ with unequal ages ($a_1 \neq a_2$), transmitting to the receiver with the lower age is not optimal. That is, if $a_1 < a_2$, then $Q(s, 1) > Q(s, 2)$.
\end{proposition}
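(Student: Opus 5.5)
The plan is to compute the difference $Q(s,1)-Q(s,2)$ explicitly for symmetric direct links ($p_1=p_2=p$), and then sign each surviving term using Lemma~\ref{lemma:monotonicity} and Lemma~\ref{lemma:symmetry}. Since both actions incur the same $C_{\mathrm{tx}}$ and share the failure branch $(m_1,m_2)$, most terms should cancel. From \eqref{eq:system dynamics u1}, \eqref{eq:system dynamics u2} and the expressions for $c(s,1)$, $c(s,2)$, I would write
\begin{align*}
Q(s,1) &= p(1+m_2)+(1-p)(m_1+m_2)+C_{\mathrm{tx}} + pU(1,m_2)+(1-p)U(m_1,m_2),\\
Q(s,2) &= p(m_1+1)+(1-p)(m_1+m_2)+C_{\mathrm{tx}} + pU(m_1,1)+(1-p)U(m_1,m_2).
\end{align*}
Subtracting, the transmission cost and the $(1-p)$ terms cancel, and I expect
\begin{equation*}
Q(s,1)-Q(s,2) = p\,(m_2-m_1) + p\,\bigl[U(1,m_2)-U(m_1,1)\bigr].
\end{equation*}

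Next I would bound the value-function bracket. By Lemma~\ref{lemma:symmetry}, $U(m_1,1)=U(1,m_1)$. Since $a_1<a_2$ implies $m_1=(a_1+1)\wedge A_{\max}\le (a_2+1)\wedge A_{\max}=m_2$, we have $(1,m_1)\preceq(1,m_2)$. Lemma~\ref{lemma:monotonicity} then gives $U(1,m_2)\ge U(1,m_1)$, so the bracket is non-negative. The remaining immediate-cost term $p(m_2-m_1)$ is non-negative for the same reason. For strictness I need both $p>0$ and $m_2>m_1$. The latter holds whenever $a_2<A_{\max}$, since then $m_2=a_2+1>a_1+1=m_1$. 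It also holds when $a_2=A_{\max}$ and $a_1\le A_{\max}-2$, because then $m_1\le A_{\max}-1<A_{\max}=m_2$. Under these conditions the difference is at least $p(m_2-m_1)>0$, which yields $Q(s,1)>Q(s,2)$.

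I expect the main obstacle to be the saturation boundary rather than the algebra. At the single state $(A_{\max}-1,A_{\max})$, both ages truncate to $m_1=m_2=A_{\max}$. The two actions then lead to the symmetric outcomes $(1,A_{\max})$ and $(A_{\max},1)$, so by Lemma~\ref{lemma:symmetry} they give $Q(s,1)=Q(s,2)$ exactly. There the strict inequality fails and only a tie holds, which is an artifact of the truncation. I would handle it by stating the claim as strict for all states away from this corner, or equivalently for $A_{\max}$ large enough that the corner is never visited under the optimal policy. At the corner itself, serving the lower-age receiver is merely not strictly better, so the conclusion that transmitting to the younger receiver is never \emph{needed} for optimality still holds. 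I would also note explicitly that the assumption $p>0$ is used, since with $p=0$ both transmit actions coincide.
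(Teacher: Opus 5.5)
Your argument is correct and is essentially the proof the paper has in mind. The paper omits it but says it rests on Lemmas~\ref{lemma:monotonicity} and~\ref{lemma:symmetry}: the transmission cost and the failure branch cancel, leaving $p(m_2-m_1)+p\,[U(1,m_2)-U(1,m_1)]$, and symmetry plus monotonicity sign this. Your caveats are also right: strictness needs $p>0$, and at the saturation corner $(A_{\max}-1,A_{\max})$ you get $m_1=m_2=A_{\max}$ and an exact tie, so the proposition's strict inequality as stated actually fails at that single state.
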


By Proposition \ref{prop:serve_older}, the decision for $d>0$ now reduces to comparing idling ($u=0$) with serving the higher age receiver (RX$_2$ in our decision region half plane). We define the advantage of transmitting over idling as $\Delta(s) \triangleq Q(s,2) - Q(s, 0)$. Thus, transmission is optimal if $\Delta(s) < 0$, and idling is optimal if $\Delta(s) > 0 $.

\begin{remark}
On the diagonal where $d=0$, actions $u=1$ and $u=2$ have the same $Q$-value ($Q((m,m), 1) = Q((m,m), 2)$). In case transmission is optimal, we arbitrarily choose $u=1$ in this case. This is why the diagonal line is blue-colored (transmit to RX$_1$) in Fig.~\ref{fig:Optimal Policy}.
\end{remark}

Since we work with the parameterized state $s = (m, m\, + d \,)$, we rewrite the dynamics in \eqref{eq:system dynamics u1} and \eqref{eq:system dynamics u2} as\begin{equation}
    P(s'|s, u = 1) = 
    \begin{cases}
    ( 1, m + d + 1 ), & \text{w.p. }  p \\[10pt]
    ( m + 1 , m + d + 1 ), & \text{w.p. }  1- p\\
    \end{cases},
\end{equation}
\begin{equation}
    P(s'|s, u = 2) = 
    \begin{cases}
    ( m+1 , \, 1 ), & \text{w.p. }  p \\[10pt]
    ( m + 1 , m + d + 1 ), & \text{w.p. }  1- p\\
    \end{cases}.
\end{equation}
Observe that idling and successful gossiping from an older (more stale) receiver to a fresher one both result in the same state space. Similarly, successful gossiping from a fresher receiver to an older one and successful gossiping in both directions result in the same state space. We group these cases and add their probabilities to get that, for idling, the state transition dynamics in \eqref{eq:system dynamics u0} become
\begin{equation}
    P(s'|s, u = 0) = 
    \begin{cases}
    ( m+1 , \, m+ 1  ), & \text{w.p. }  p_v \\[10pt]
    ( m + 1 , m + d + 1 ), & \text{w.p. }  1- p_v\\
    \end{cases}.
\end{equation}

\begin{figure}[t]
    \centering
    \includegraphics[width=0.8\linewidth]{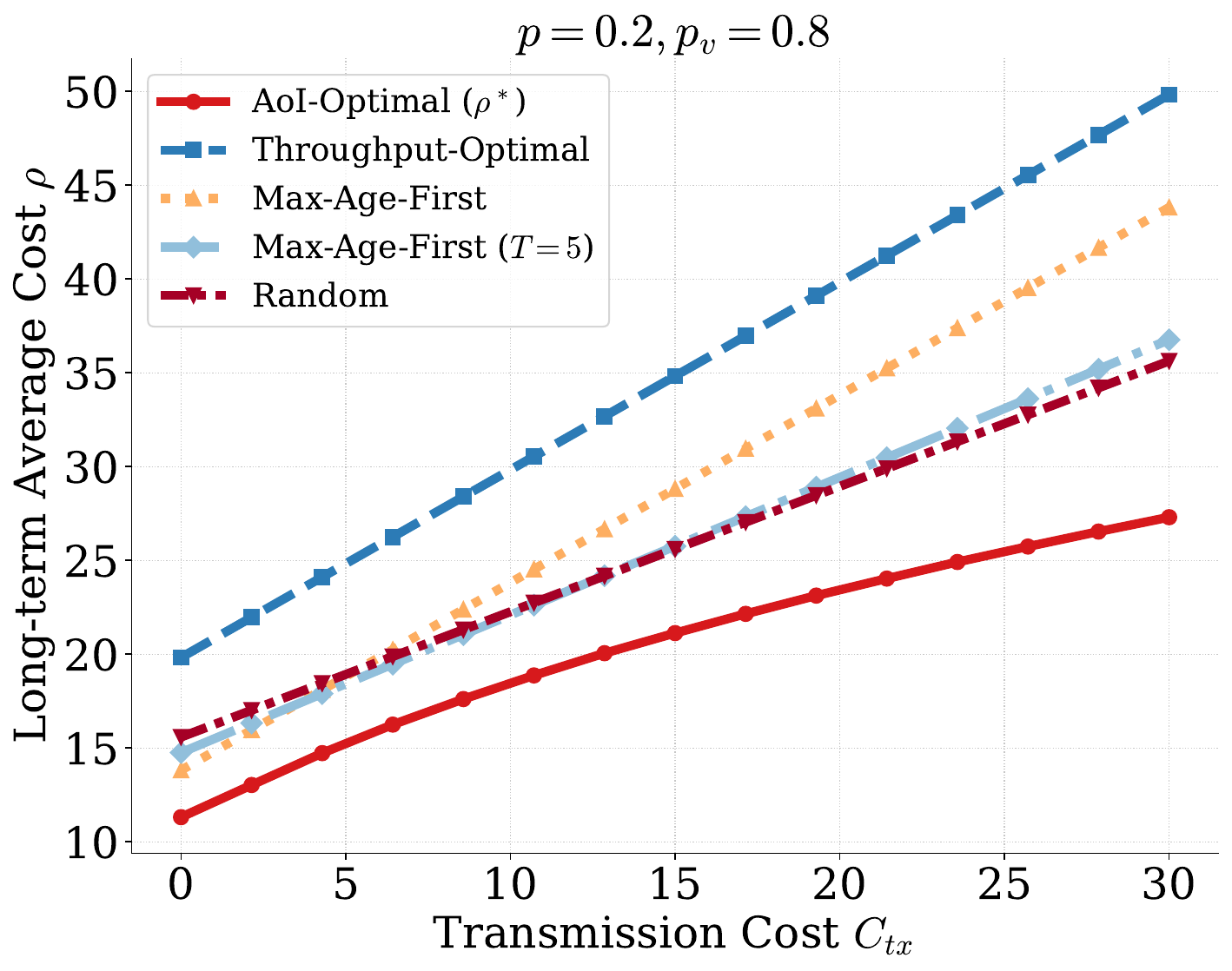} 
    \caption{Performance against varying transmission cost.}
    \label{fig:transmission cost 2}
\end{figure} 

We now proceed to prove that $\Delta(s)$ is monotonic in $d$. 
\begin{proposition}
\label{propostion:threshold}
For any fixed $m$,\, $\Delta(m, m+d)$ is strictly increasing in $d$. 
\end{proposition}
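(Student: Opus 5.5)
Write $\Delta(m,m+d)=Q((m,m+d),2)-Q((m,m+d),0)$ and expand both $Q$-values using the parameterized dynamics. Away from saturation, the immediate cost of serving RX$_2$ is
\[
c(s,2)=p(m+2)+(1-p)(2m+d+2)+C_{\mathrm{tx}}.
\]
The immediate cost of idling is
\[
c(s,0)=p_v(2m+2)+(1-p_v)(2m+d+2).
\]
The stage-cost part of $\Delta$ is therefore
\[
(1-p)(m+d)-(1-p_v)d-mp_v+C_{\mathrm{tx}}+\text{const}.
\]
Its $d$-dependence is $(p_v-p)d$, which is strictly increasing because $p_v>p$.

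For the future-value part, the failure branches of both actions lead to the same state $(m+1,m+d+1)$, so they combine into a single term. This leaves
\[
\Delta(m,m+d)=C+(p_v-p)d+p\,U(m+1,1)+(p_v-p)\,U(m+1,m+d+1)-p_v\,U(m+1,m+1),
\]
where $C$ collects terms independent of $d$. The key observation is that, after cancellation, the only $d$-dependent value term is $(p_v-p)\,U(m+1,m+d+1)$. The terms $U(m+1,1)$ and $U(m+1,m+1)$ do not depend on $d$. Moreover, by Lemma~\ref{lemma:symmetry}, $U(m+1,1)=U(1,m+1)$, so no orientation issue arises.

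Next, compare $d$ with $d+1$:
\[
\Delta(m,m+d+1)-\Delta(m,m+d)=(p_v-p)\bigl[1+U(m+1,m+d+2)-U(m+1,m+d+1)\bigr].
\]
By Lemma~\ref{lemma:monotonicity}, the bracketed difference of $U$ values is nonnegative. The factor $p_v-p$ is strictly positive. Hence the increment is at least $p_v-p>0$, which gives strict monotonicity.

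The main obstacle is the saturation boundary at $A_{\max}$. When $m+d+1\ge A_{\max}$, the ages $m_2$ in the stage costs are capped. The stage-cost slope then vanishes, and the next states coincide, so strictness genuinely fails once $a_2$ is saturated. I would handle this by stating the result for $m+d+1<A_{\max}$, i.e., in the unsaturated region where the dynamics are exactly those written above. At the boundary I would argue only weak monotonicity, where the increment is zero by the capping. I would also verify that the grouping of the gossip outcomes is valid for $d=0$, where all outcomes lead to $(m+1,m+1)$. In that case the formula still holds since $U(m+1,m+d+1)=U(m+1,m+1)$.
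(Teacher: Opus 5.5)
Your proposal is correct and follows essentially the same route as the paper: expand $Q(s,2)$ and $Q(s,0)$, merge the common failure-branch state $(m+1,m+d+1)$ to get $\Delta(m,m+d)=(p_v-p)d+(p_v-p)U(m+1,m+d+1)+C_{\mathrm{tx}}-pm+pU(m+1,1)-p_vU(m+1,m+1)$, and then bound the one-step increment $(p_v-p)\bigl[1+U(m+1,m+d+2)-U(m+1,m+d+1)\bigr]$ below by $p_v-p>0$ using Lemma~\ref{lemma:monotonicity}. Your saturation caveat is a valid refinement that the paper's proof silently ignores: when $m+d+1=A_{\max}$, the states $(m,m+d)$ and $(m,m+d+1)$ have identical costs and transitions, so $\Delta$ does not change there, and the strict increment holds only when $m+d+2\le A_{\max}$ (one minor slip: the $d$-independent part of your stage-cost difference should be $C_{\mathrm{tx}}-pm$, which does not affect the argument).
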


\begin{proof}
The Q-value for transmitting is given by
\begin{align}
\label{eq:q_value_transmit}
\begin{split}
Q(s, 2) ={}& p(m+2) + (1 - p)(2m+d+2) + C_{\mathrm{tx}} \\
& + p U(m+1, 1) + (1 - p) U(m+1, m+d+1),
\end{split}
\end{align}
and the Q-value for idling is given by
\begin{align}
\label{eq:q_value_idle}
\begin{split}
Q(s, 0 ) ={}& (1 - p_v)(2m+d+2) + p_v(2(m+1)) \\
& + (1 - p_v) U(m+1, m+d+1) \\
&+ p_v U(m+1, m+1).
\end{split}
\end{align}
Taking the difference $Q(s, 2) - Q(s, 0)$ and rearranging gives
\begin{align}
\begin{split}
\Delta(m, m+d) ={}& (p_v - p)d + (p_v - p)U(m+1, m+d+1) \\
&+C_{\mathrm{tx}} - mp + pU(m+1, 1)\\
&- p_vU(m+1, m+1)
\end{split}
\label{eq:delta cost}
\end{align}
We now examine the difference $\Delta(m, m+d+1) - \Delta(m, m+d)$. By the equation above, this difference can be simplified to
\begin{equation}
\label{eq:discrete_diff}
(p_v - p) \left[ 1 + U(m+1, m+d+2) - U(m+1, m+d+1) \right].
\end{equation}
Now for $p_v > p$, the term $(p_v-p)$ is strictly positive. By Lemma~\ref{lemma:monotonicity}, the difference of the $U$ terms is non-negative. Therefore, the entire expression in \eqref{eq:discrete_diff} is strictly positive, proving that $\Delta(m, m+d)$ is strictly increasing in $d$. \end{proof}

We now present the main structural result.

\begin{theorem}
\label{thm:switching}
Along the diagonal states \(s = (m,m)\), transmit is optimal if and only if
\begin{align}
C_{\mathrm{tx}} < p\big(m + U(m{+}1,m{+}1) - U(m{+}1,1)\big)
\label{eq:diag-cond}
\end{align}
holds. Further, for such $m$ there exists a threshold \(d_l(m)\) in the lateral directions (from the diagonal) after which the transmitter switches from \emph{transmit} to \emph{idle}. This is given by
\begin{align} \label{eq:lateral-threshold}
d_l(m) = \min\{\, d>0 : \Delta(m,m{+}d) > 0 \,\}.
\end{align}

Conversely, if the condition in \eqref{eq:diag-cond} does not hold, the optimal policy is to idle, starting from the diagonal state, remaining idle along the lateral directions without switching.
\end{theorem}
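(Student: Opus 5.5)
The plan is to reduce everything to the sign of $\Delta(m,m+d)$, using the explicit expression \eqref{eq:delta cost} derived in the proof of Proposition~\ref{propostion:threshold}, together with the monotonicity in $d$ established there.

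\textbf{Diagonal condition.} On the diagonal $s=(m,m)$, set $d=0$ in \eqref{eq:delta cost}. The terms $(p_v-p)d$ vanish, and the coefficients of $U(m+1,m+1)$ combine to $(p_v-p)-p_v=-p$. This gives
\begin{align*}
\Delta(m,m) = C_{\mathrm{tx}} - p\big(m + U(m{+}1,m{+}1) - U(m{+}1,1)\big).
\end{align*}
By the Remark, $Q((m,m),1)=Q((m,m),2)$; this can also be checked directly from Lemma~\ref{lemma:symmetry}, since $U(1,m+1)=U(m+1,1)$. So transmission (to either receiver) beats idling exactly when $\Delta(m,m)<0$, which is \eqref{eq:diag-cond}. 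I would remark that ties ($\Delta=0$) are broken toward idling, consistent with the strict inequality in the statement. I would also note that Lemma~\ref{lemma:monotonicity} makes the bracket nonnegative, so the condition is meaningful.

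\textbf{Lateral threshold.} Fix $m$ satisfying \eqref{eq:diag-cond}, so $\Delta(m,m)<0$. By Proposition~\ref{prop:serve_older}, for $d>0$ the only competitors are $u=2$ and $u=0$, and the decision is governed by the sign of $\Delta(m,m+d)$. Proposition~\ref{propostion:threshold} shows that $\Delta(m,m+d)$ is strictly increasing in $d$, with increments at least $p_v-p>0$ by \eqref{eq:discrete_diff}. Hence the set $\{d>0:\Delta(m,m+d)>0\}$ is upward closed: once $\Delta$ becomes positive it stays positive. Transmission to RX$_2$ is therefore optimal for $0<d<d_l(m)$, apart from a possible single tie point, and idling is optimal for all $d\ge d_l(m)$.

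The step I expect to be the main obstacle is showing that $d_l(m)$ is well defined in the presence of the saturation at $A_{\max}$. Without saturation, the linear growth $(p_v-p)d$ forces $\Delta\to\infty$, so the minimum exists. With saturation, $d$ ranges only up to $A_{\max}-m$, so $\Delta$ might stay negative throughout. My first attempt would be to interpret the minimum over the empty set as $+\infty$, meaning transmission along the entire lateral segment, and to state this convention explicitly. A second subtlety is that the increment in \eqref{eq:discrete_diff} involves $U(m+1,m+d+2)$, which must be clipped at $A_{\max}$. I would check that Proposition~\ref{propostion:threshold}'s monotonicity still holds at the boundary; it does, since the $U$-difference becomes $0$ while the $1$ remains.

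\textbf{Converse.} If \eqref{eq:diag-cond} fails, then $\Delta(m,m)\ge 0$, so idling is optimal at $(m,m)$. The bound $\Delta(m,m+d)\ge\Delta(m,m)+(p_v-p)d$ is not covered by Proposition~\ref{propostion:threshold}, which only compares $d$ with $d+1$ for $d\ge1$. So I would first apply the same difference computation from $d=0$ to $d=1$: \eqref{eq:delta cost} is valid at $d=0$, so the increment formula \eqref{eq:discrete_diff} also holds there. This gives $\Delta(m,m+d)>0$ for all $d>0$. Combined with Proposition~\ref{prop:serve_older}, idling is then strictly optimal along the whole lateral direction. By Lemma~\ref{lemma:symmetry}, the same holds in the mirrored half plane $a_1>a_2$, which completes the proof.
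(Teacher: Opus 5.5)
Your route is the same as the paper's. You set $d=0$ in \eqref{eq:delta cost} to get $\Delta(m,m)$, then use monotonicity of $\Delta(m,m+d)$ in $d$ (Proposition~\ref{propostion:threshold}) for both the lateral threshold and the converse. Your additions are sound, and in places more careful than the paper:
\begin{itemize}
\item the symmetry argument that $Q((m,m),1)=Q((m,m),2)$;
\item the explicit increment from $d=0$ to $d=1$;
\item the concern about whether $d_l(m)$ exists. The paper simply asserts it from strict monotonicity, but strict increase alone does not force $\Delta$ to change sign.
\end{itemize}

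One side claim is wrong. At the saturation boundary the ``$1$'' in \eqref{eq:discrete_diff} does not survive. It comes from the cost terms $(1-p)(m_1+m_2)$ and $(1-p_v)(m_1+m_2)$, and $m_2=(m+d+1)\wedge A_{\max}$ saturates there too. Concretely, $(m,A_{\max}-1)$ and $(m,A_{\max})$ have the same $(m_1,m_2)=(m+1,A_{\max})$. They therefore have identical costs, transition laws and $Q$-values. So $\Delta(m,A_{\max})=\Delta(m,A_{\max}-1)$, and the last increment is $0$, not $p_v-p$.

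This does not break the argument. Upward-closedness of $\{d>0:\Delta(m,m+d)>0\}$ only needs $\Delta$ to be nondecreasing. You do need two corrections:
\begin{itemize}
\item Drop ``increments at least $p_v-p$'' at the boundary.
\item Soften the converse there. For $m=A_{\max}-1$, the states $(m,m)$ and $(m,m+1)$ both map to $(m_1,m_2)=(A_{\max},A_{\max})$. So $\Delta(m,m)=0$ gives a tie, not strictly optimal idling.
\end{itemize}

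The same observation shows your $\min\emptyset=+\infty$ convention is genuinely needed, not just a formality. Take $m=A_{\max}-1$ with transmission optimal on the diagonal. Then $\Delta(m,m+1)=\Delta(m,m)<0$, so no finite $d_l(m)$ exists, contrary to the literal statement.
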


\begin{proof}
We start with the converse part. Recall that the optimal policy is to transmit when $\Delta(s) < 0$ and to idle when $\Delta(s) > 0$. By Proposition~\ref{propostion:threshold}, if $\Delta(m,m)\geq0$, then $\Delta(m,m+d)\geq0$ for all $d\geq0$, and hence idling is always optimal along the lateral lines if idling at the diagonal. Now let us evaluate \(\Delta(m,m)\) by setting \(d=0\) in \eqref{eq:delta cost} to get
\begin{align}
\Delta(m,m) &= (p_v-p)(0) + (p_v-p)U(m+1,m+1) \notag\\
&\quad + C_{\mathrm{tx}} - m p + pU(m+1,1) \notag\\
&\quad - p_vU(m+1,m+1) \notag\\
&=C_{\mathrm{tx}} - p\bigl(m + U(m+1,m+1) - U(m+1,1)\bigr) \label{eq:diag-delta}
\end{align}
Thus, \eqref{eq:diag-cond} does not hold when idling is optimal.

Now let us assume that \eqref{eq:diag-cond} holds and transmit is optimal. By Proposition~\ref{propostion:threshold}, there exists a threshold satisfying \eqref{eq:lateral-threshold} at which $\Delta(s)$ switches from negative to positive and hence idle becomes optimal.
\end{proof}

Theorem~\ref{thm:switching} shows that the optimal policy on the diagonal is to idle when the transmission cost $C_{\mathrm{tx}}$ is not justified by the age reduction a transmission may provide. This is true when the age is relatively low. However, once both ages grow high enough, the policy then intervenes by transmitting. Once activated along the diagonal, in the lateral directions when the age difference is beyond the thresholds, the policy idles again to let the receivers possibly update each other without incurring transmission cost. These set of results demonstrate the age-difference threshold structure of the optimal policy with minimum age activation.

The next corollary provides a sufficient condition for TX activation, inspired by the results of the theorem.

\begin{corollary}
    Let $m^*$ denote the minimum age of the two receivers after which the transmitter is activated. We have
    \begin{align}
        m^* \le \left\lceil \frac{C_{\mathrm{tx}}}{p} \right\rceil\triangleq \bar{m}.
    \end{align}
\end{corollary}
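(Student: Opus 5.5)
The plan is to use the diagonal criterion of Theorem~\ref{thm:switching}. By \eqref{eq:diag-delta}, transmission is optimal at the diagonal state $(m,m)$ exactly when
\begin{align*}
\Delta(m,m) = C_{\mathrm{tx}} - p\bigl(m + U(m{+}1,m{+}1) - U(m{+}1,1)\bigr) < 0 .
\end{align*}
So it is enough to show that this quantity is negative, or at least non-positive, once $m \ge \bar m$. This would place the activation point $m^*$, the first diagonal age at which the transmitter transmits, no later than $\bar m$.

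The key step is to bound the value-function term from below. Since $m+1 \ge 1$, we have $(m{+}1,1) \preceq (m{+}1,m{+}1)$ componentwise. Lemma~\ref{lemma:monotonicity} then gives
\begin{align*}
U(m{+}1,m{+}1) - U(m{+}1,1) \ge 0 .
\end{align*}
Plugging this into \eqref{eq:diag-delta} yields $\Delta(m,m) \le C_{\mathrm{tx}} - pm$. For $m \ge \bar m = \lceil C_{\mathrm{tx}}/p \rceil$ we have $pm \ge C_{\mathrm{tx}}$, and hence $\Delta(m,m) \le 0$. Therefore transmitting is at least as good as idling on the diagonal for every $m \ge \bar m$. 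By the definition of $m^*$ as the activation age, this gives $m^* \le \bar m$.

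The main obstacle is the boundary case of equality. If $pm = C_{\mathrm{tx}}$ exactly, for example when $C_{\mathrm{tx}}/p$ is an integer, and also $U(m{+}1,m{+}1) = U(m{+}1,1)$, then $\Delta(\bar m,\bar m) = 0$ and the two actions tie. Strict inequality then requires one of the following. The first option is to show strict monotonicity, $U(m{+}1,m{+}1) > U(m{+}1,1)$ for $m \ge 1$. I would try this first: the per-slot cost $c(s,u)$ is strictly increasing in $a_2$ for every action, and a coupling argument (the same one that presumably underlies Lemma~\ref{lemma:monotonicity}) should keep a strictly positive gap of at least the first-step cost difference before the saturation level $A_{\max}$. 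The second option is to adopt the tie-breaking convention that transmission is chosen whenever $\Delta = 0$, which is consistent with treating $\Delta \le 0$ as activation.

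A secondary point concerns the bound when $\bar m$ exceeds $A_{\max}$. In that case the statement holds trivially, or vacuously, within the truncated state space, and I would remark on this rather than prove anything further.
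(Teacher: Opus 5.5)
Your proposal is correct and follows the paper's proof. Like the paper, you use Lemma~\ref{lemma:monotonicity} to get $U(m{+}1,m{+}1)\ge U(m{+}1,1)$, hence $\Delta(m,m)\le C_{\mathrm{tx}}-pm$; the paper then concludes $\Delta(m,m)<0$ for all $m>\bar m$, reading $m^*$ as the age \emph{after which} the transmitter is active. Your extra handling of the tie at $m=\bar m$, via strict monotonicity or a tie-breaking convention, is a refinement the paper omits, and it is what actually supports the paper's later claim that activation is guaranteed for $m\ge 5$ when $C_{\mathrm{tx}}/p=5$ exactly.
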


\begin{proof}
By Lemma~\ref{lemma:monotonicity}, the relative value function is component-wise non-decreasing, implying
\begin{align}
U(m{+}1,m{+}1) \ge U(m{+}1,1).
\label{eq:monotonic}
\end{align}
Substituting \eqref{eq:monotonic} into \eqref{eq:diag-delta}, we get the upper bound
\begin{align}
\Delta(m,m) \le C_{\mathrm{tx}} - p\,m.
\label{eq:diag-delta-bound}
\end{align}
Hence, $\Delta(m,m)<0,~\forall m>\bar{m}$. Thus, the activation minimum age $m^*$ cannot be larger than $\bar{m}$.
\end{proof}

Applying the above results on 
the optimal policy in Fig.~\ref{fig:Optimal Policy}, for $p\!=\!0.2$ and $C_{\mathrm{tx}}\!=\!1$, we get $\bar{m}=5$. That is, activation is guaranteed for $m\geq5$. Indeed, the policy begins transmitting from $m\!=\!4$.


\section{Conclusion}
This work studied freshness-aware scheduling in a single transmitter serving two gossiping receivers. We formulated the joint minimization of transmission cost and sum AoI as an average-cost MDP and established a sequence of structural properties for the optimal policy. In case of relatively reliable gossiping, analytical results established that the optimal policy exhibits an age-difference threshold structure with minimum age activation: once it is activated, the transmitter idles when receiver ages differ beyond a threshold, and before that it transmits to the higher age receiver. Numerical results showed that this policy significantly outperforms multiple baselines. Future work includes extensions to larger and more involved gossiping networks.

\bibliographystyle{unsrt} 
\bibliography{references}  
\end{document}